\declaretheorem[numberwithin=section,refname={Theorem,Theorems},Refname={Theorem,Theorems}]{theorem}
\declaretheorem[numberlike=theorem]{definition}
\declaretheorem[numberlike=theorem,style=remark]{remark}
\theoremstyle{definition}
\title{MMS Approximations Under Additive Leveled Valuations}
\author{Mahyar Afshinmehr\footnote{Sharif University of Technology, {mahyarafshinmehr@gmail.com}} \and  Mehrafarin Kazemi\footnote{Sharif University of Technology, {mehr.kzm@gmail.com}} \and Kurt Mehlhorn\footnote{Max Planck Institute for Informatics,  and Fachbereich Informatik, Universit\``at des Saarlandes, Saarland Informatics Campus, {mehlhorn@mpi-inf.mpg.de}}}
\date{August 2024}
\begin{document}

\maketitle

\begin{abstract}
    We study the problem of fairly allocating indivisible goods to a set of agents with additive leveled valuations. A valuation function is called leveled if and only if bundles of larger size have larger value than bundles of smaller size. The economics literature has well studied such valuations.
    We use the maximin-share (MMS) and EFX as standard notions of fairness. We show that an algorithm introduced by Christodoulou et al. (\cite{christodoulou2024fairtruthfulallocationsleveled}) constructs an allocation that is EFX and $\sfrac{\lfloor \frac{m}{n} \rfloor}{\lfloor \frac{m}{n} \rfloor + 1}\text{-MMS}$. In the paper, it was claimed that the allocation is EFX and $\sfrac{2}{3}\text{-MMS}$. However, the proof of the MMS-bound is incorrect. We give a counter-example to their proof and then prove a stronger approximation of MMS. 
\end{abstract}

\section{Introduction}

Allocating indivisible goods to a set of agents is a central problem in economics and computer science. Various notions of fairness, such as proportionality, envy-freeness, and maximin-share fairness, have been studied in the case of indivisible items. Although proportionality and envy-freeness provide existential guarantees for divisible goods, they cannot be obtained for indivisible items. Therefore, several relaxations of these notions, such as \emph{envy-freeness-up-to-any-good (EFX)} and (approximate) \emph{maximin-share (MMS)}, are in the interest of the fair division community.

\emph{Envy-freeness-up-to-any-good (EFX)} is a compelling and well-studied notion of fairness proposed by \cite{10.1145/3355902}. While it has attracted many interests, it is still open whether such allocations exist even for four agents with additive valuations. Therefore, EFX allocations have been studied in various restricted settings. \emph{Maximin-share (MMS)} is another well-studied fairness notion proposed by \cite{RePEc:ucp:jpolec:doi:10.1086/664613}, which is a natural extension of the Cut and choose protocol. An allocation is MMS fair if every agent receives a bundle with a value greater or equal to her maximin-share. Intuitively, the maximin-share corresponds to the maximum value an agent can guarantee after proposing an initial allocation and keeping the least desirable bundle for herself.

In this note, we focus on a specific class of valuation functions called additive leveled valuations. A valuation is additive if the value of a bundle of items is the sum of the values of its items. A valuation is leveled if a bundle with a larger cardinality is preferred over a bundle with a smaller cardinality. These valuations capture a simple economic behavior that prioritizes quantity over quality.

\subsection{Our Contribution}

We study the problem of fairly allocation a set of $m$ indivisible items to a set of $n$ agents. We use the notions of EFX and (approximate) MMS for fairness. We focus on the case that each agent has an additive leveled valuation function over sets of items. We show that an algorithm given in~\cite{christodoulou2024fairtruthfulallocationsleveled} guarantees both EFX and $\sfrac{\lfloor \frac{m}{n} \rfloor}{\lfloor \frac{m}{n} \rfloor + 1}\text{-MMS}$ for such instances. In the paper, $\sfrac{2}{3}$\text{-MMS} was claimed. We give a counter-example for the proof provided in \cite{christodoulou2024fairtruthfulallocationsleveled} and give a proof for a stronger approximation guarantee for such instances. Note that we assume $m \ge 2n$ since instances with $m < 2n$ admit a full MMS allocation as shown in Proposition 3.1 of \cite{christodoulou2024fairtruthfulallocationsleveled}.

\subsection{Related Work}

This section discusses prior works on EFX, MMS, and leveled valuations. The literature on the fair allocation of indivisible items is growing rapidly. Therefore, we only cover a small number of prior works in this area and refer the reader to a comprehensive survey such as \cite{Amanatidis_2023}.

\textbf{EFX.}
The idea of EFX as a fairness notion raises interesting challenges in fair allocation theory, leading to many open questions. \cite{DBLP:journals/corr/PlautR17} showed that EFX allocations exist when agents have the same valuations or have identical ordering over goods. Although EFX allocations exist for two agents with general valuations\footnote{Cut and choose protocol can be used to compute EFX allocations for two agents.}, this result cannot be generalized to three agents with general valuation.
\cite{DBLP:journals/corr/abs-2002-05119} showed that complete EFX allocations exist for three agents with additive valuation functions. This was later improved by \cite{DBLP:journals/corr/abs-2102-10654} and \cite{57690578beb44ed8a8923cc540ed32c6}.  
Other studies have examined cases with a limited number of items or agents and specific types of valuations (\cite{DBLP:journals/corr/abs-1909-07650}, \cite{DBLP:journals/corr/abs-2008-08798}, \cite{10.1287/moor.2022.0044}). A recent important finding is that EFX allocations may not exist for general monotone valuations in chore allocation situations (\cite{christoforidis2024pursuitefxchoresnonexistence}).

\textbf{MMS.}
\cite{10.1145/3140756} showed that MMS allocations may not exist under additive valuations. Later studies achieved strong approximation guarantees, with \cite{akrami2023breaking34barrierapproximate} recently surpassing the previous threshold of 3/4. The current impossibility results for additive valuations stand at $1 - \sfrac{1}{n^4}$ by \cite{feige2021tightnegativeexamplemms}. Furthermore, MMS allocations have been widely studied beyond additive valuations. \cite{DBLP:journals/corr/BarmanM17}, \cite{uziahu2023fairallocationindivisiblegoods}, \cite{10.1016/j.artint.2021.103633}, and \cite{10.1016/j.artint.2023.104049} have well-studied the class of complement-free valuations.

\textbf{Leveled Valuations.}
A typical example of leveled valuations is allocating offices to different university departments. \cite{DBLP:journals/corr/BabaioffNT17} studied the leveled preferences in the context of competitive equilibrium in markets, while \cite{DBLP:journals/corr/abs-2109-08671} studied the existence of EFX allocations for agents equipped with leveled valuations for chores.

\section{Models and Preliminaries}

In this section, we introduce our setting and notations.

\textbf{Model.} A set of indivisible goods, denoted by $M$ with $|M| = m$, will be allocated to a set of agents, denoted by $N = \{1, 2, \ldots, n\}$. Each agent like $i$ has a valuation function denoted by $V_i : 2^M \rightarrow R^{+}$.

\textbf{Valuation functions.} In this note, we focus on a specific class of valuations called \emph{additive leveled} valuations. 

\begin{definition}
    A valuation function $V$ is additive if, for any bundle of items $S$ we have:

    \begin{align*}
        V(S) = \sum\limits_{g \in S} V({g})
    \end{align*}
    
\end{definition}

\begin{definition}
    A valuation function $V$ is leveled if, for any two bundles $S$ and $T$ with $|S| > |T|$, it holds that $V(S) > V(T)$.
\end{definition}

\begin{definition}
    A valuation function is additive leveled if it is both additive and leveled.
\end{definition}

\textbf{Allocation.} An allocation is an $n\text{-partition}$ of items into $n$ bundles. We call $B = (B_1, B_2, \ldots, B_n)$ an allocation if $\bigcup\limits_{i \in N} B_i = M$ and for every $i, j \in N$ such that $i \neq j$, we have $B_i \cap B_j = \emptyset$. In this note, we use $B$ and $B_i$ to denote a full allocation and agent $i$'s bundle, respectively. $\Pi_n(M)$ denotes the set of all possible $n\text{-partitions}$.

\textbf{Fairness notions.} Fairness notions can be divided into two categories called as \emph{envy-based} or \emph{share-based}. 

\begin{definition}
    An allocation is called envy-Free (EF) if, for every pair of distinct agents like $i$ and $j$, we have $V_i(B_i) \ge V_i(B_j)$.
\end{definition}

It has been known that EF allocations do not necessarily exist. A simple example without an EF allocation is to consider a single good with a positive value and two identical agents. Therefore, the fair division community widely studies several relaxations of envy freeness. The strongest and most popular \emph{Envy} based notion is \emph{envy-freeness-up-to-any-good (EFX)}.

\begin{definition}
    An allocation is called envy-Free up to any good if, for every pair of distinct agents like $i, j$ and every item $g \in B_j$, we have $V_i(B_i) \ge V_i(B_j \setminus g)$.\footnote{\text{We sometimes use $g$ instead of $\{g\}$ for the ease of notation.}}
\end{definition}

From the share-based notions, we introduce the \emph{maximin-share (MMS)} notion. $\mu_i$ denotes the maximin-share of an agent $i$, defined as follows:

\begin{definition}
    For an agent $i$, the maximin-share is the maximum value she can obtain after proposing an $n\text{-partition}$ of goods and securing the worst bundle for herself. i.e., 

    \begin{align*}
        \mu_i = \max\limits_{B \in \Pi_n(M)} \min\limits_{j \in N} V_i(B_j)
    \end{align*}
\end{definition}

\begin{definition}
    An allocation $B$ is said to be maximin-share fair if $V_i(B_i) \ge \mu_i$ for all $i \in N$.
\end{definition}

Since it has been known that exact MMS allocations do not necessarily exist, we focus on multiplicative approximations of MMS.

\begin{definition}
    Let $\alpha \in (0, 1]$. An allocation $B$ is said to be $\alpha\text{-MMS}$ if $V_i(B_i) \ge \alpha\mu_i$ for all $i \in N$.
\end{definition}

\section{EFX and Approximate MMS Guarantees}

In this section, we show that the algorithm in~\cite{christodoulou2024fairtruthfulallocationsleveled}, constructs an allocation that is EFX and guarantees a $\sfrac{\lfloor\frac{m}{n}\rfloor}{\lfloor\frac{m}{n}\rfloor + 1}\text{-MMS}$

For completeness, we quote the algorithm and the proof that the generated allocation by the algorithm is EFX from~\cite{christodoulou2024fairtruthfulallocationsleveled}\footnote{Check Theoerem 4 and Algorithm 1 in~\cite{christodoulou2024fairtruthfulallocationsleveled}.}. 

\textbf{Description of the algorithm.}
Let $m = kn + r$ be the total number of items with $r$ being a nonnegative integer smaller than $n$ and let $k = \lfloor m/n \rfloor$. The algorithm runs sequentially. We begin by fixing a quota system in which the first $n - r$ agents in the sequence are set to pick their most favourite bundle of $\lfloor m/n \rfloor$ goods. Subsequently, the remaining agents choose their favourite subsets of items, each containing $\lfloor m/n \rfloor + 1$ goods. Therefore, the agents get to choose a fixed number of items from a feasible set according to a predefined order.

\begin{algorithm} 
\caption{EFX under Leveled Valuations}
\begin{algorithmic}[1] \label{alg}
    \REQUIRE An instance with leveled valuations, $m = kn + r$
    \ENSURE An EFX allocation
    \STATE Select an arbitrary picking order $\sigma = [\sigma_1, \sigma_2, \dots, \sigma_n]$
    \STATE Let the first $n - r$ agents (according to $\sigma$) pick their favourite bundle of $\lfloor m/n \rfloor$ available items
    \STATE Let the remaining $r$ agents pick their favourite bundle, each consisting of $\lfloor m/n \rfloor + 1$ items
    \RETURN Allocation
\end{algorithmic}
\end{algorithm}

\begin{theorem}[\cite{christodoulou2024fairtruthfulallocationsleveled}]
    Algorithm~\ref{alg} constructs an EFX allocation for leveled valuations. 
\end{theorem}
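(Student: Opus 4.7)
The plan is to verify the EFX condition $V_i(B_i) \ge V_i(B_j \setminus \{g\})$ for every ordered pair of distinct agents $i, j$ and every $g \in B_j$, by splitting on which agent picks first in the order $\sigma$ and which of the two size classes (bundles of size $k$ or of size $k+1$, where $k = \lfloor m/n \rfloor$) each agent falls into. The leveled property and the ``favorite bundle at pick time'' rule will be the only ingredients needed.

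First I would handle the case where $i$ picks before $j$. Here the items that make up $B_j$ were all available to $i$ at her turn, so any subset of the currently available pool of the same size as $B_i$ is a candidate she could have chosen. If $|B_i| = |B_j|$ (both agents are in the first $n-r$ or both are in the last $r$), the greedy choice gives $V_i(B_i) \ge V_i(B_j) \ge V_i(B_j \setminus \{g\})$. If instead $i$ is among the first $n-r$ and $j$ among the last $r$, then $|B_j \setminus \{g\}| = k = |B_i|$, and $B_j \setminus \{g\}$ was also available when $i$ picked, so the favorite-bundle choice again yields $V_i(B_i) \ge V_i(B_j \setminus \{g\})$. Notice that the reverse split (i.e.\ $i$ in the last $r$ and $j$ in the first $n - r$) cannot occur when $i$ precedes $j$, because the algorithm schedules all $k$-pickers before all $(k+1)$-pickers.

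Next I would handle the case where $j$ picks before $i$. In every subcase one can check that $|B_i| > |B_j \setminus \{g\}|$: if both agents are in the same size class then $|B_i|$ equals $|B_j|$ which exceeds $|B_j| - 1$; if $i$ is in the last $r$ (hence $|B_i| = k+1$) and $j$ in the first $n-r$ (hence $|B_j \setminus \{g\}| = k-1$), the gap is even larger; and the remaining subcase is forbidden by the picking order as above. Since $V_i$ is leveled, $|B_i| > |B_j \setminus \{g\}|$ immediately gives $V_i(B_i) > V_i(B_j \setminus \{g\})$.

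I do not expect any real obstacle here: the leveled assumption bails us out whenever $i$'s bundle is strictly larger, and the greedy ``favorite bundle'' rule bails us out whenever the relevant bundle was available at $i$'s turn. The only mild subtlety is to observe that the algorithm's ordering convention rules out the one awkward configuration (a late-picking $k$-agent envying an earlier $(k+1)$-agent), which makes the case analysis exhaustive.
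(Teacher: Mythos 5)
Your proof is correct and follows essentially the same approach as the paper's: the leveled property disposes of every configuration in which $|B_i| > |B_j\setminus\{g\}|$, and the greedy ``favorite bundle'' rule disposes of every configuration in which $B_j\setminus\{g\}$ (or $B_j$ itself) was still available at $i$'s turn. Your version is in fact somewhat more explicit than the paper's, which compresses the same-size-class cases into a single sentence and only spells out the one nontrivial case of a $k$-agent EFX-envying a later $(k+1)$-agent.
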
 

\begin{proof}
    We divide the agents into two levels, namely level $L$ and level $H$; agents in the lower level possess an item less than those in the higher level. Clearly, agents in $H$ attain more value than agents in $L$; they are in fact envy-free towards them. Therefore, the EFX criterion can be  violated only among agents residing at different levels, i.e, an agent in $L$ may strongly envy an agent in $H$. Suppose, for the sake of contradiction, that EFX is violated, that is, there is an agent $i$ in $L$ that EFX-envies an agent $j$ in level $H$, i.e., there is $g \in B_j$ such that $i$ prefers $B_j \setminus g$ over $B_i$. But then $i$ would have picked $B_j \setminus g$ since she preceded $j$ in the picking order. 
\end{proof}

We come to the approximation bound for the value of the maximin-share. \cite{christodoulou2024fairtruthfulallocationsleveled}, in Proposition 5.1, claimed that the allocation is $\sfrac{2}{3}\text{-MMS}$. First, we exhibit a flaw in their proof. Then, we provide another proof for a stronger guarantee of approximate MMS. Note that we assume $m \ge 2n$ since instances with $m < 2n$ admit a full MMS allocation as shown in Proposition 3.1 of \cite{christodoulou2024fairtruthfulallocationsleveled}.

\subsection{Problem in Proof of Proposition 5.1 in \cite{christodoulou2024fairtruthfulallocationsleveled}}

This section examines the proof for Proposition 5.1 provided in \cite{christodoulou2024fairtruthfulallocationsleveled}. We quote the original proof given in \cite{christodoulou2024fairtruthfulallocationsleveled} and highlight an oversight in red, discussing why it is invalid. After that, we provide a counter-example for which the proof does not work. 

\vspace{5mm}

\textbf{Original Proof.}
``Consider an arbitrary allocation \( B = (B_1, \ldots, B_n) \) obtained via Algorithm \ref{alg}. Let \( i \) be an agent that receives \( B_i \), \( |B_i| = k = \lfloor m/n \rfloor \) and let allocation \( B' = (B'_1, \ldots, B'_n) \) denote her MMS allocation. Assume without loss of generality that \( k \geq 2 \); the case where \( m < 2n \) is handled by Proposition 3.1 via an SDQ. Note that since the valuations are additive, there must exist an item \( g \in B_i \) such that \( V_i(g) \leq \frac{1}{2}V_i(B_i) \) since \( k \geq 2 \). Otherwise, the items would add up to something larger than \( V_i(B_i) \). If the bundles of all agents are of size \( k \), then \( g \) belongs to a bundle, say \( B'_j \), in her MMS allocation \( B' \). We know that \( |B'_j \backslash g| = k - 1 \), and thus, due to leveled valuations, \( V_i(B'_j \backslash g) \leq V_i(B_i) \). Therefore, we have \( \mu_i \leq V_i(B_i) + V_i(g) \leq \frac{3}{2}V_i(B_i) \). Differently, some agents may receive one less item than others as shown in Algorithm \ref{alg}. But then, \( \mu_i \leq V_i(B'_j) \) for some bundle \( B'_j \) with \( |B'_j| = k + 1 \) that contains \( g \), \textcolor{red} {and thus, \( V_i(B'_j \backslash g ) \leq V_i(B_i) \) (since \( i \) precedes \( j \) in the order)}. We conclude that \( \mu_i \leq V_i(B'_j) \leq V_i(B_i) + V_i(g) \leq \frac{3}{2}V_i(B_i) \)."

\vspace{5mm}

\textbf{Brief Explanation of The Flaw in The Justification:}
The assumption in the proof, ``$V_i(B'_j \setminus g) \leq V_i(B_i)$", is not necessarily correct in the highlighted part. This statement, justified by ``since $i$ precedes $j$ in the order", incorrectly assumes comparability between the values of bundles with the same size across different allocations $B$ and $B'$. For the assumption to hold, the bundles being compared must be in $B$.

\vspace{5mm}

\textbf{Counter Example:}
Consider a scenario with three agents, $a_1$, $a_2$, and $a_3$, and ten items, $g_1, \ldots, g_{10}$, where items $g_1$, $g_2$, $g_3$ each have a value of $b$ and items $g_4, \ldots, g_{10}$ have a value of $a$ for every agent, with $b > a$. For an additive valuation with these values for singletons, it is sufficient to have $4a > 3b$ to be leveled as well.

Suppose that $B$ is the allocation given by running Algorithm \ref{alg}. The bundles for agents would then be:
\[
B_1 = \{g_1, g_2, g_3\}, \quad B_2 = \{g_4, g_5, g_6\}, \quad B_3 = \{g_7, g_8, g_9, g_{10}\}
\]
Suppose these bundles are assigned in the same order to agents $a_1$, $a_2$, and $a_3$. Consider another allocation $B'$ that is a MMS-partition for $a_2$:
\[
B'_1 = \{g_1, g_4, g_5\}, \quad B'_2 = \{g_2, g_7, g_8\}, \quad B'_3 = \{g_3, g_6, g_9, g_{10}\}
\]
We show that the highlighted part of the proposition is invalid for $a_2$. Assume $g = g_6$, where $V_2(g) \leq \frac{1}{2}V_2(\{g_4,g_5, g_6\})$ as desired. Since $g$ is in $B'_3$, the proof claims we must have:

\begin{align*} 
 V_2(B'_3 \setminus g) = V_2(\{g_3, g_9, g_{10}\}) \leq V_2(\{g_4, g_5, g_6\}) = V_2(B_2),
\end{align*}
a contradiction.

\subsection{Existence of $\sfrac{\lfloor\frac{m}{n}\rfloor}{\lfloor\frac{m}{n}\rfloor + 1}\text{-MMS}$ }
    
This section will prove a stronger approximation of MMS under additive leveled valuations. The theorem is as follows: 

 \begin{theorem}
     Consider an instance with $n$ agents and $m$ items. Let $m = kn + r$ where $k = \lfloor \sfrac{m}{n} \rfloor$ and $0 \le r < m$. Then, there exists a $\sfrac{k}{k + 1}\text{-MMS}$ allocation under additive leveled valuations.
 \end{theorem}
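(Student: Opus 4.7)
The plan is to prove $V_i(B_i) \geq \frac{k}{k+1}\mu_i$ for every agent $i$ by a case analysis on $|B_i|$ and on $i$'s position in the picking order $\sigma$. The main structural fact I would set up first is that, for additive leveled valuations, one may always choose an MMS-partition $B'=(B'_1,\dots,B'_n)$ of agent $i$ in which every bundle has size at least $k$ and in which the minimum-value bundle $B'_*$ has size exactly $k$. If some bundle had size smaller than $k$, then some other bundle has size larger than $k$ (the total is $kn+r\geq kn$); moving an item from the latter into the former weakly improves the minimum by leveledness (items in a leveled valuation have strictly positive value, and the two resized bundles both have value strictly greater than the old size-$<k$ bundle), so iterating produces an MMS-partition with all bundles of size at least $k$. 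Since the average bundle size is $k+r/n<k+1$, some bundle has size exactly $k$, and leveledness places any minimum-value bundle among the size-$k$ bundles.

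With this in hand, the case $|B_i|=k+1$ is immediate: leveledness gives $V_i(B_i)>V_i(B'_*)=\mu_i$. So assume $|B_i|=k$; let $R_i$ be the items available when $i$ picks and $t$ her position in $\sigma$. Since $i$ and all of her predecessors are among the first $n-r$ pickers (each picking $k$ items), $|R_i|=k(n-t+1)+r$. When $|R_i|\geq k+1$, I would pick any size-$(k+1)$ subset $S\subseteq R_i$ and let $g\in S$ minimize $V_i$ on $S$, so $V_i(g)\leq V_i(S)/(k+1)$. Because $B_i$ is agent $i$'s favourite size-$k$ subset of $R_i$ and $S\setminus\{g\}$ is such a size-$k$ subset, $V_i(S\setminus\{g\})\leq V_i(B_i)$; adding $V_i(g)$ and rearranging gives $V_i(S)\leq \frac{k+1}{k}V_i(B_i)$. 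Leveledness (comparing $|S|=k+1>k=|B'_*|$) then gives $\mu_i<V_i(S)\leq \frac{k+1}{k}V_i(B_i)$, finishing this sub-case.

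The only remaining configuration is $r=0$ and $t=n$: agent $i$ is the last picker, $B_i=R_i$ consists of the $k$ leftover items, and there is no size-$(k+1)$ subset of $R_i$ to feed into the previous trick. I expect this to be the main obstacle; indeed this is precisely where the original argument in~\cite{christodoulou2024fairtruthfulallocationsleveled} breaks. My plan here is to upper-bound $\mu_i$ directly. Let $v_1\geq v_2\geq\dots\geq v_m$ be the items sorted by $V_i$, write $v_m$ for the smallest item, and set $T=v_1+\dots+v_{k-1}$. Since $r=0$, the preliminary fact tells us every bundle in the MMS-partition has size exactly $k$, so the bundle containing $v_m$ consists of $v_m$ together with $k-1$ other items of total value at most $T$; its value is therefore at most $v_m+T$, whence $\mu_i\leq v_m+T$. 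Finally, every item of $B_i$ has value at least $v_m$, so $V_i(B_i)\geq kv_m$ and hence $v_m\leq V_i(B_i)/k$; and leveledness applied to $B_i$ (size $k$) and the top $k-1$ items (size $k-1$) gives $V_i(B_i)>T$. Combining, $\mu_i\leq v_m+T<V_i(B_i)/k+V_i(B_i)=\frac{k+1}{k}V_i(B_i)$, as required.
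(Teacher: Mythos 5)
Your proof is correct, and it reaches the bound by a genuinely different decomposition than the paper's. The paper splits on $r=0$ versus $r>0$: for $r=0$ it locates agent $i$'s own cheapest item $g\in B_i$ (with $V_i(g)\le V_i(B_i)/k$) inside the MMS-partition and uses leveledness on the size-$(k-1)$ remainder; for $r>0$ it sorts the items, groups them into consecutive blocks $A_1,A_2,\dots$ of size $k$, re-indexes the picking order without loss of generality, and compares $V_i(B_i)\ge k\,V_i(g')$ against $\mu_i\le V_i(A_1)\le V_i(A')\le (k+1)V_i(g')$. You instead split on the size of the residual pool $R_i$ at agent $i$'s turn: whenever $|R_i|\ge k+1$ (which covers all of $r>0$ and every non-last picker when $r=0$), taking an arbitrary $(k+1)$-subset $S\subseteq R_i$, bounding its cheapest element by $V_i(S)/(k+1)$, and invoking the greedy optimality of $B_i$ among size-$k$ subsets of $R_i$ gives $V_i(S)\le\tfrac{k+1}{k}V_i(B_i)$ and then $\mu_i<V_i(S)$ by leveledness; this is simpler than the paper's block construction and avoids the reordering bookkeeping. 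Your residual case ($r=0$ and $i$ last) is handled by a direct upper bound $\mu_i\le v_m+T$ on the MMS bundle containing the globally cheapest item, where the paper's $r=0$ argument (which applies to every size-$k$ agent at once, not just the last picker) would also have sufficed. Two further points in your favor: you explicitly justify, via the exchange argument, that the MMS-partition may be taken with all bundles of size at least $k$ and minimum-value bundle of size exactly $k$ --- the paper merely asserts the size profile of $B'$ --- and your case split cleanly isolates the one configuration where the argument of \cite{christodoulou2024fairtruthfulallocationsleveled} actually needed repair. All steps check out (the termination of the exchange process follows from the decreasing total deficit $\sum_j\max(0,k-|B'_j|)$, and the count $|R_i|=k(n-t+1)+r$ is right), so this is a valid alternative proof.
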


\begin{proof}
    The desired allocation is obtained via Algorithm \ref{alg}. Let $B = (B_1, \ldots, B_n)$ be the output of the Algorithm \ref{alg}, and let $B' = (B'_1, \ldots, B'_n)$ to be the MMS allocation for agent $i$. Also, denote the MMS value of agent $i$ by $\mu_i = \min\limits_{j \in [n]} V_i(B'_j)$. We only care about $k \ge 2$ since $m < 2n$  has an exact MMS allocation, as discussed in \cite{christodoulou2024fairtruthfulallocationsleveled}. Note that there are $n - r$ bundles of size $k$ and $r$ bundles of size $k + 1$ in both allocations $B$ and $B'$. Since our valuations are leveled, the bundle corresponding to the MMS value (the one with the minimum value for agent $i$ in allocation $B'$) has size $k$. If $|B_i| = k + 1$, then by leveled valuations we have that $V_i(B_i) \ge \mu_i$. Therefore, we only consider the case where $|B_i| = k$. We divide the problem into two cases as follows:

    \textbf{Case 1: $r = 0$.} Since the valuations are additive, there exists an item $g \in B_i$ where $V_i(g) \le \frac{1}{k} V_i (B_i)$. Assume $g \in B'_j$ in allocation $B'$. Note that we have $|B'_j \setminus g| = k - 1$. Therefore, we have:
    \begin{align*}
        V_i(B'_j \setminus g) \le V_i(B_i) \text{ and hence } V_i(B'_j) \le V_i(B_i) + V_i(g) \le \frac{k + 1}{k} V_i(B_i)
    \end{align*}
     Since $V_i(B'_j) \ge \mu_i$, allocation $B$ is $\frac{k}{k + 1}\text{-MMS}$ for agent $i$.

    \textbf{Case 2: $r > 0$.} We sort the items according to the valuation of agent $i$. Then, we start from the highest valued item and group every $k$ consecutive item (in the order) together. We denote the $j$-th group of items $A_j$. W.l.o.g, we can assume that agents $1$ to $n - r$ have bundles of size $k$ and have chosen their bundle in the lexicographic order.
    Now, let $g'$ be the least valued item of the group $A_i$. Since agent $i$ was the $i$-th agent who chose her bundle of size $k$, at most $k(i - 1)$ items from $\bigcup\limits_{j = 1}^{j = i} A_j$ are already chosen when $i$ comes to choose, which means there are still $k$ items left from this set. Therefore, we have  $V_i(B_i) \ge k \cdot V_i(g')$. Now, observe that $\mu_i \le V_i(A_1)$. Since $r > 0$, we have at least $k + 1$ items after $A_i$ in the mentioned ordering. Let $A'$ be the group of the first $k + 1$ items after $A_i$. Then $V_i(A_1) \le V_i(A') \le (k + 1) V_i(g') \le \frac{k+1}{k} V_i(B_i) $. Thus $B$ is $\frac{k}{k+1}$-MMS. 
\end{proof}

\begin{remark}
    Theorem 5 in \cite{christodoulou2024fairtruthfulallocationsleveled} establishes that Algorithm \ref{alg} satisfies EFX, Pareto Optimality, truthfulness, non-bossiness, and neutrality. In this note, we demonstrate that the algorithm can also guarantee a $\frac{\lfloor\frac{m}{n}\rfloor}{\lfloor\frac{m}{n}\rfloor + 1}\text{-MMS}$, enhancing these valuable properties.
\end{remark}

\section{Conclusion and Discussion}

We studied fairness under additive leveled valuations using two different notions of fairness. We suspect that studying leveled valuations using our analysis might be helpful and provide meaningful insights for other important classes of valuations in the economics literature.

\bibliographystyle{abbrv}
\bibliography{main}
\end{document}